\theoremstyle{definition}
\newtheorem{thm}{Theorem}
\newtheorem{lem}{Lemma}
\newtheorem{defn}{Definition}
\newtheorem{rem}{Remark}
\title{Optimal Demand Private Coded Caching for Users with Small Buffers}
\author{K.K. Krishnan Namboodiri and B. Sundar Rajan \\
Department of Electrical Communication Engineering, IISc Bangalore, India \\
E-mail: \{krishnank, bsrajan\}@iisc.ac.in \vspace{-3mm}
}
\begin{document}

\maketitle
\begin{abstract}
Coded Caching is an efficient technique to reduce peak hour network traffic. One limitation of known coded caching schemes is that the demands of all users are revealed to their peers in the delivery phase. Schemes that assure privacy for user demands are studied in recent past. Assuming that the users are equipped with caches of small memory sizes, the achievable rate under demand privacy constraints is investigated in this work. We present an MDS code based demand private coded caching scheme with $K$ users and $N$ files that achieves a memory rate pair $\left(\frac{1}{K(N-1)+1},N\left(1-\frac{1}{K(N-1)+1}\right)\right)$. The presented memory-rate pair meets the lower bound under demand-privacy requirements, proposed by Yan \textit{et al.} in the recent work \cite{c13}. By memory sharing this characterizes the exact rate-memory trade-off for the demand private coded caching scheme for cache memory $M\in \left[0,\frac{1}{K(N-1)+1}\right]$.
\end{abstract}


\section{Introduction}
\label{intrn}

Data traffic has grown at a rapid pace over the last decade especially that of multimedia files such as video-on-demand. Since communication resources become scarce during peak usage hours, caching at off-peak hours becomes a natural solution. Maddah Ali and Niesen had provided a scheme \cite{c1} that showed that by utilizing multicast opportunities, coded caching can achieve significant gain over uncoded caching. Yu, Maddah Ali and Avestimehr improved this scheme and provided one that is optimal for uncoded placement \cite{c2}. In \cite{c3}, Chen, Fan and Letaief showed that coded caching schemes with coded prefetching can perform better compared to schemes with uncoded prefetching.
The coded caching problem of rate minimization by the joint design of placement and delivery phase has been studied in different settings like decentralized caching\cite{c8}, shared caches\cite{c9}, multiple levels of caches\cite{c10} etc. By the design of placement delivery arrays(PDAs), Yan \textit{et al.} investigated the trade-off between rate and subpacketization of files\cite{c11}.    

In this work, we focus on the coded caching problem under demand privacy requirements of the users. Recently, demand privacy for the coded caching problem has been investigated in \cite{c4},\cite{c5},\cite{c6},\cite{c7} and \cite{c13}. In \cite{c7}, Wan and Caine study demand privacy for users with multiple file requests. In [$6$], Aravind \textit{et al.} address the problem of demand privacy under subpacketization requirements. In \cite{c4}, coded caching under perfect information theoretic privacy was studied. Sneha Kamath \textit{et al.} gives exact memory rate trade-off for a 2-user, 2-file demand private coded caching problem in \cite{c5}. In \cite{c13}, authors derive a lower bound on the rate for demand-private coded caching schemes. 

In this paper we present a demand private coded caching scheme that gives a new achievable $(M,R)$ pair for a general $K$-user, $N$-file system, which completely characterize the rate-memory trade-off of demand-private coded caching for cache memory $M\in [0,\frac{1}{K(N-1)+1}]$.   

\section{Preliminaries}

A server with $N$ files is connected to $K$ users through an error-free broadcast link. Assume that $N$ files $W_n; n\in [N]\triangleq \{1,2,\hdots,N\}$ are independent and each file is of size $F$ bits and takes values uniformly in the set $[2^F] = \{1,2,\hdots,2^F\}$. $\mathbf{W} \triangleq \{W_1,W_2,\hdots,W_N\}$. Each user $k\in[K]$ has a local cache of capacity $MF$ bits (Normalized capacity of $M$), $0\leq M\leq N$. Cache of the users are filled by the server with file contents in the placement phase when the user demands are unknown. The cache content of the $k^{th}$ user is denoted by $Z_k$. In the delivery phase, user $k$ demands file $W_{d_k}$ from the server, where demand $d_k$ are all i.i.d random variables distributed over $[N]$. The demands of all users is thus, $\textbf{d} = \{d_1,d_2,\hdots,d_K\}$. All users convey their demands over a private link between the user and the server, without revealing it to other users. Let $\mathbf{d}_{\tilde{k}}$ denote all demands but $d_k$. That is, $\mathbf{d}_{\tilde{k}} = \mathbf{d}\backslash \{d_k\}$.

Then the server broadcasts a message, $X^d$ of size $RF$ bits (normalized to $R$) to all the users.  $R$ is termed as the rate of the transmission.       

The main objective of coded caching problem is to minimize $R$ satisfying the decodabiity criterion 
\begin{equation}
H(W_{d_k}| Z_k, d_k, X^d) = 0, \hspace{0.5cm}  \forall k\in[K] \label{eq1}
\end{equation}
for all the users. That is, every user should be able to decode the demanded file using the local cache content and the transmission, $X^d$.

A coded caching scheme with $K$ users, $N$ files, cache size $M$ and rate $R$ is denoted as $(K,N;M,R)$ caching scheme or simply $(K,N)$ caching scheme. But in the usual non-private coded caching schemes, the demand of users will be revealed to their peers by the transmissions in the delivery phase. So the additional requirement of demand privacy for all users is,
\begin{equation}
I(\mathbf{d}_{\tilde{k}}\text{ } ;\text{ } d_k, Z_k, X^d | \mathbf{W} ) = 0.  \label{eq2}
\end{equation}  
The $k^{th}$ user should be completely uncertain about the demand of all other users, given all information user $k$ has. \\
A coded caching scheme that satisfy demand privacy constraint in  \eqref{eq2} in addition to \eqref{eq1} is referred as a demand private coded caching scheme. And the schemes that do not satisfy \eqref{eq2} are called as non-private coded caching schemes. For a $(K,N)$ non-private coded caching scheme for a cache memory of $M$, $R_{K,N}(M)$ denotes the minimum transmission rate $R$ by which the decodability criterion \eqref{eq1} is satisfied for all the users. $R^{*}_{K,N}(M)$ is the minimum achievable rate among all the $(K,N)$ non-private coded caching schemes for a cache memory of $M$. Similarly, for a cache memory $M$, $R_{K,N}^p(M)$ and $R^{p*}_{K,N}(M)$ are the achievable rate for a particular $(K,N)$ demand-private coded caching scheme and the minimum among all the $(K,N)$ demand private coded caching schemes respectively. That is, in addition to \eqref{eq1}, demand private schemes should satisfy \eqref{eq2} also. Therefore, any demand private coded caching scheme with transmission rate $R^{p}_{K,N}(M)$  for a cache memory $M$ will satisfy the following inequality.
\begin{equation}
R^{*}_{K,N}(M)\leq R^{p*}_{K,N}(M) \leq R^{p}_{K,N}(M) \label{eq3}
\end{equation}

Yan \textit{et al.} give a lower bound on the optimal rate-memory trade-off under demand privacy constraints in \cite{c13}. That is, a $(K,N)$ demand-private coded caching scheme with cache memory, $M$ should satisfy,
\begin{equation}
\label{lowerbound}
R^{p*}_{K,N}(M)\geq \max_{l\in[N]} \left(l+\frac{\min\{l+1,K\}(N-l)}{N-l+\min\{l+1,K\}}-lM\right).
\end{equation}

Consider a $(K,N)$ coded caching system. In a particular demand vector $\mathbf{d}$, let $l_n$ be the number of users requesting file $W_n,\text{ }n\in [N]$. We sort the set $\{l_1,l_2,\hdots,l_N\}$  in the descending order, and the resultant vector is termed as the demand profile of $\mathbf{d}$. In \cite{c12}, author studies the coded caching problem under different demand profiles. The term they use is \textit{demand type} instead of demand profile. For example, in a caching system with $4$ users and $2$ files, the possible demand profiles are $[4,0]$, $[3,1]$ and $[2,2]$.

\begin{defn}[\textbf{Uniform Demand Profile}]
If all the files are requested by the same number of users, then the resulting demand profile is said to be uniform. In the previous example, the demand profile $[2,2]$ is uniform.
\end{defn}


Now, we review some important results regarding demand-private coded caching that are already existing in the literature.
\begin{thm}(Theorem 1 in \cite{c4})
\label{thm1}
For a $(K,N)$ demand private coded caching with each user having a cache of memory, $M$, the rate
$$ R_{K,N}^{p}(M) = \frac{\binom{KN}{KM+1}-\binom{KN-N}{KM+1}}{\binom{KN}{KM}},\hspace{0.1cm} \text{if } M\in \{0,\frac{1}{K},\hdots,N\}$$
is achievable. Also, the lower convex envelope of $R_{K,N}^{p}(M)$ is achievable.
\end{thm}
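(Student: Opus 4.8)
The plan is to reduce the private scheme to an application of the Yu--Maddah-Ali--Avestimehr (YMA) delivery \cite{c2} on an \emph{augmented} (virtual) caching system, and to obtain privacy ``for free'' by forcing the augmented demand vector always to have a uniform demand profile. Note first that the hypothesis $M\in\{0,\tfrac1K,\dots,N\}$ is exactly the statement that $\bar t\triangleq KM$ is an integer in $\{0,1,\dots,KN\}$; this integer will play the role of the caching parameter in a virtual system with $\bar K=KN$ users and the same $N$ files. Since a virtual user in such a system has normalized memory $N\bar t/\bar K = M$, each virtual cache has exactly the budget available to a real user, which is what makes the reduction memory-feasible.

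For the placement I would partition the $KN$ virtual users into $K$ groups $G_1,\dots,G_K$ of $N$ users each, with $G_k$ attached to real user $k$, and apply the standard MN placement for $KN$ users with parameter $\bar t$: split each file $W_n$ into $\binom{KN}{\bar t}$ subfiles $W_{n,\mathcal S}$, $\mathcal S\subseteq[KN]$, $|\mathcal S|=\bar t$, and let virtual user $j$ store $\{W_{n,\mathcal S}: j\in\mathcal S,\ n\in[N]\}$. The server privately draws, for each $k$, a key $p_k$ uniformly from $G_k$ and lets real user $k$ store the single virtual cache $Z_{p_k}$; this uses memory $M$ and keeps $p_k$ hidden from the other users. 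In the delivery phase each group $G_k$ is assigned a \emph{permutation} of $[N]$ as its virtual demands, chosen so that the position $p_k$ receives the true demand $d_k$; the resulting length-$KN$ virtual demand vector $\bar{\mathbf d}$ therefore requests every file exactly $K$ times, i.e.\ it has a uniform profile (so the number of distinct requests is $N_e=N$) for \emph{every} realization of $\mathbf d$. Running the YMA delivery for $\bar{\mathbf d}$ then yields the transmission $X^d$ of normalized length $\big(\binom{KN}{\bar t+1}-\binom{KN-N}{\bar t+1}\big)/\binom{KN}{\bar t}$, which is exactly the claimed rate.

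Decodability \eqref{eq1} I would verify directly: real user $k$ holds $Z_{p_k}$ and the virtual demand at $p_k$ is $d_k$, so it can run virtual user $p_k$'s YMA decoding --- every coded term other than its own is supported on subfiles indexed by sets containing $p_k$, hence cached in $Z_{p_k}$ --- to recover $W_{d_k}$; that the remaining $KN-K$ virtual caches are never physically stored is irrelevant, since only the $K$ real users must decode. The main obstacle is the privacy condition \eqref{eq2}. Here the idea is that, conditioned on $\mathbf W$, the transmission $X^d$ is a deterministic function of $\bar{\mathbf d}$, whose profile is \emph{always} uniform; and because the keys $(p_1,\dots,p_K)$ are independent, uniform and private, the map from $\mathbf d$ to $\bar{\mathbf d}$ can be arranged so that the conditional law of $(d_k,Z_k,X^d)$ given $\mathbf W$ is invariant under every choice of $\mathbf d_{\tilde k}$. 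Making this precise --- showing that the within-group permutations together with the secret positions wash out all information about other users' demands, so that $I(\mathbf d_{\tilde k}\,;\,d_k,Z_k,X^d\mid\mathbf W)=0$ --- is the delicate step, and I would handle it by exhibiting, for any two demand vectors agreeing in coordinate $k$, a measure-preserving bijection on the key/permutation randomness that produces identical $(d_k,Z_k,X^d)$.

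Finally, the achievability of the lower convex envelope follows by the usual memory-sharing argument: to realize a memory $M$ lying between two corner points $M_1,M_2\in\{0,\tfrac1K,\dots,N\}$, split each file into two parts in the appropriate ratio and run the corner-point schemes independently on the two parts, adding the rates; since each constituent scheme is private, the combined scheme is private as well, and the achieved $(M,R)$ pair traces the lower convex envelope of $R_{K,N}^p(\cdot)$.
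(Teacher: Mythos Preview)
Your proposal is correct and follows essentially the same virtual-user reduction that the paper describes (citing \cite{c4}): attach $N$ virtual users to each real user, give real user $k$ the cache of a privately chosen virtual user in its block, extend $\mathbf d$ to a length-$KN$ demand with uniform profile, and run the YMA delivery of \cite{c2} on the $(KN,N)$ system. The only sharpening in the paper's account is that the within-block permutation is taken concretely as the cyclic shift of $(1,\dots,N)$ by $(S_k-d_k)\bmod N$, which makes your ``measure-preserving bijection'' step immediate since for each fixed $d_j$ the map $S_j\mapsto\mathbf q_j$ is a bijection onto the set of cyclic shifts.
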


\begin{thm}(Theorem 4 in \cite{c6})
\label{thm2}
If there exist a $(KN,N;M,R)$ non-private coded caching scheme, then there exist a $(K,N;M,R)$ demand private coded caching scheme.
\end{thm}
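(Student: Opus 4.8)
The plan is to prove the statement constructively via the \emph{virtual users} (grid) reduction: starting from a non-private $(KN,N;M,R)$ scheme with caches $\{Z_{(k,n)}\}_{k\in[K],\,n\in[N]}$ and delivery map $X^{(\cdot)}$, I would build a private $(K,N;M,R)$ scheme by identifying each real user $k\in[K]$ with a \emph{block} of $N$ virtual users $\{(k,1),\dots,(k,N)\}$, so that the $K$ blocks together constitute the $KN$ users of the given scheme. For the placement I would equip each real user $k$ with a private key $\pi_k$, drawn uniformly and independently from the symmetric group $S_N$ and shared only with the server, and let user $k$ store the single virtual cache $Z_k:=Z_{(k,\pi_k(1))}$. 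Since every $Z_{(k,n)}$ has size $MF$, the memory is exactly $M$.

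For the delivery, given the true demand vector $\mathbf{d}=(d_1,\dots,d_K)$, I would have the server assemble a virtual demand vector $\tilde{\mathbf{d}}$ for the $KN$ users block by block: in block $k$ the virtual user $(k,\pi_k(1))$ is assigned the demand $d_k$, while the remaining $N-1$ virtual users are assigned the remaining $N-1$ files of $[N]\setminus\{d_k\}$ in the positions fixed by $\pi_k$. Each block then requests every file of $[N]$ exactly once, $\tilde{\mathbf{d}}$ is a legal demand vector for the non-private scheme, and the server simply broadcasts $X^{\tilde{\mathbf{d}}}$, so the rate is $R$. Decodability \eqref{eq1} is then immediate: user $k$ knows $\pi_k(1)$, holds $Z_{(k,\pi_k(1))}$, and the virtual user $(k,\pi_k(1))$ demands $d_k$, so the decodability of the non-private scheme recovers $W_{d_k}$ from $Z_k$ and $X^{\tilde{\mathbf{d}}}$.

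The crux, and the step I expect to be the main obstacle, is establishing the privacy constraint \eqref{eq2}. The key fact I would prove is that, because $\pi_k$ is uniform on $S_N$ and the files assigned inside block $k$ always form the full set $[N]$, the demand sub-vector that block $k$ presents to the non-private scheme is a \emph{uniformly random permutation of $[N]$ whose law is independent of the value $d_k$} (applying a uniform random permutation to a fixed sequence that is itself a permutation yields a uniform permutation, regardless of which file was placed first). Since the keys $\pi_1,\dots,\pi_K$ are mutually independent and $X^{\tilde{\mathbf{d}}}$ is a deterministic function of $\tilde{\mathbf{d}}$, conditioning on user $k$'s full view $(\mathbf{W},d_k,Z_k,\pi_k)$ leaves every other block's sub-vector distributed as an independent uniform permutation, independent of $\mathbf{d}_{\tilde k}$; hence the conditional law of $X^{\tilde{\mathbf{d}}}$ carries no information about $\mathbf{d}_{\tilde k}$. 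Finally I would check that marginalizing out the private key $\pi_k$ preserves this independence, so that the mutual information in \eqref{eq2} is exactly zero in the stated form, completing the argument that the constructed $(K,N;M,R)$ scheme is demand private.
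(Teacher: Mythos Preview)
Your proposal is correct and follows essentially the same virtual-users reduction that the paper describes (in the paragraph following Theorem~\ref{thm2}, summarizing \cite{c4}): partition the $KN$ non-private users into $K$ blocks of $N$, give each real user a secret key selecting one cache from its block, and have every block request all $N$ files so that the block's sub-demand vector is distributed independently of the real demand. The only difference is cosmetic: the paper uses a single random shift $S_k\in[N]$ and makes block $k$'s sub-demand a cyclic shift of $(1,2,\dots,N)$ by $(S_k-d_k)\bmod N$, whereas you use a full random permutation $\pi_k\in S_N$; both choices render the block's sub-demand vector uniform over a fixed set independent of $d_k$, so the privacy argument goes through identically, and both yield the uniform demand profile needed for Theorem~\ref{thm2a}.
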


In \cite{c4}, author derives a $(K,N)$ demand private coded caching scheme from a $(KN,N)$ non-private scheme. By that construction, each of the $K$ users in the demand private scheme is mapped to $N$ virtual users in the non-private scheme (total of $KN$ virtual users).  For user $k\in [K]$ in the demand private scheme, server independently generates a number, $S_k$ uniformly at random from $[N]$ and conveys the same to user $k$ through the dedicated channel from the server to the user in the placement phase.  The cache placement is in accordance with the non-private scheme. But the earlier mapping of actual user in the demand private scheme to the virtual users in the non-private scheme gives $N$ different choices of the cache contents to every user connected to the server. So, during the placement phase, server populates  $k^{th}$ user's cache with the content that is supposed to keep in $(k-1)N+S_k$-th virtual user's cache. Therefore, under this placement policy user $k'$ will have an uncertainty regarding the cache content of user $k$ since $k'$ is unaware of $S_k$. In the delivery phase, the $K$ length demand vector $\mathbf{d}=[d_1,d_2,\hdots,d_K]$ is mapped to a $KN$ long virtual demand vector $\mathbf{\tilde{d}}=[\mathbf{q_1},\mathbf{q_2},\hdots,\mathbf{q_K}] $, where $\mathbf{q_k}$ is an $N$ length vector obtained by applying $(S_k-d_k) \text{ }mod \text{ }N$ right cyclic shifts on $(1,2,\hdots,N)$. The transmission is corresponding to the $(KN,N)$ non-private coded caching scheme for a demand vector $\mathbf{\tilde{d}}$. The user $k$ in the actual demand private scheme becomes the $(k-1)N+S_k$ -th user in the $(KN,N)$ non-private scheme and the user can decode the demanded file. Since $S_k$ is unknown for user $j\in[K]\backslash \{k\}$, privacy for all the users are guaranteed. More details can be found in \cite{c4}.

As the $(KN,N)$ non-private scheme, in \cite{c4} author uses the coded caching scheme proposed in \cite{c2} to achieve the rate memory trade-off in theorem \ref{thm1} where the placement is uncoded. In \cite{c5} and \cite{c6}, authors use coded placement for demand private coded caching. In our work, we propose a demand private coded caching scheme which also makes use of coded placement.

Each of the $\mathbf{q_1},\mathbf{q_2},\hdots,\mathbf{q_K}$ are just shifted versions of $(1,2,\hdots,N)$. So, by the construction of $\mathbf{\tilde{d}}$, each file is requested exactly by $K$ virtual users and hence the demand profile of the $(KN,N)$ non-private scheme used for the construction of $(K,N)$ demand-private scheme is always uniform. In Theorem \ref{thm2}, authors are giving a sufficient condition for the existence of a $(K,N)$ demand-private coded caching scheme. By the above observation regarding the virtual demand vector, we can improve the sufficient condition and restate Theorem \ref{thm2} as follows. 

\begin{thm} (Improved Theorem \ref{thm2})
\label{thm2a}
If an $(M,R)$ point is achievable for a $(KN,N)$ non-private coded caching scheme under a demand  with uniform profile, then the same is achievable for $(K,N)$ demand private coded caching scheme. That is, the rate achieved by the $(KN,N)$ non-private scheme for a demand with the demand profile $\underbrace{[K,K,\hdots,K]}_{\text{$N$ times}}$ can be achieved by the $(K,N)$ demand-private coded caching scheme with the same cache memory. 
\end{thm}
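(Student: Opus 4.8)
The plan is to reuse the reduction of \cite{c4} essentially verbatim and simply track which demand profiles of the $(KN,N)$ non-private scheme are actually invoked by the construction. First I would set up the reduction exactly as recalled above: associate to each user $k\in[K]$ the block of virtual users $\{(k-1)N+1,\ldots,kN\}$, draw offsets $S_1,\ldots,S_K$ i.i.d. uniformly on $[N]$, and populate the cache $Z_k$ with the content prescribed by the non-private scheme for virtual user $(k-1)N+S_k$. Given the real demand $\mathbf{d}$, form the virtual demand vector $\tilde{\mathbf{d}}=[\mathbf{q_1},\ldots,\mathbf{q_K}]$, where each $\mathbf{q_k}$ is the $(S_k-d_k)\bmod N$ right cyclic shift of $(1,2,\ldots,N)$, and broadcast the non-private delivery message for $\tilde{\mathbf{d}}$.

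The crucial observation --- which is the entire content of the improvement over Theorem \ref{thm2} --- is that $\tilde{\mathbf{d}}$ can never have a non-uniform profile. Since each block $\mathbf{q_k}$ is a cyclic permutation of $(1,2,\ldots,N)$, every file index appears exactly once within each block, hence exactly $K$ times across the $K$ blocks. Thus the profile of $\tilde{\mathbf{d}}$ is identically $[K,K,\ldots,K]$ for every choice of $\mathbf{d}$ and every realization of the offsets. Consequently the construction only ever queries the non-private scheme on uniform demands, so it suffices that the $(M,R)$ pair be achievable on such demands; this is strictly weaker than requiring a full $(KN,N;M,R)$ scheme, since the uniform profile need not be worst case.

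I would then close the two requirements. For decodability, the offset is chosen so that the $S_k$-th coordinate of $\mathbf{q_k}$ equals $d_k$ (a short $\bmod N$ check shows the value in position $S_k$ of the $(S_k-d_k)$-shift of $(1,\ldots,N)$ is $d_k$), so user $k$ plays the role of virtual user $(k-1)N+S_k$, whose demand under $\tilde{\mathbf{d}}$ is precisely $W_{d_k}$. Since the non-private scheme satisfies \eqref{eq1} at rate $R$ on the uniform profile $\tilde{\mathbf{d}}$, user $k$ recovers $W_{d_k}$ from $(Z_k,X^{\tilde{\mathbf{d}}})$. For privacy, the argument of \cite{c4} transfers unchanged: because each $S_j$ is uniform on $[N]$ and independent of everything user $k$ observes, every value of any other demand $d_j$ remains equally consistent with the transmission, yielding \eqref{eq2}.

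I do not expect a genuine obstacle, since both the reduction and its privacy analysis are inherited from \cite{c4}; the only content added here is the profile-uniformity remark. The one place where care is needed is the bookkeeping of the cyclic-shift indexing, to confirm simultaneously that (i) the virtual profile is exactly $[K,\ldots,K]$ and (ii) the designated virtual user indeed demands $W_{d_k}$. Both are immediate once a shift convention is fixed, but keeping the $\bmod N$ arithmetic internally consistent is the lone spot where a sign or off-by-one slip could occur.
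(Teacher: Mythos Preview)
Your proposal is correct and mirrors the paper's own argument: the paper simply observes, in the paragraph immediately preceding the theorem, that each $\mathbf{q_k}$ is a cyclic shift of $(1,\ldots,N)$ so the virtual demand $\tilde{\mathbf{d}}$ always has profile $[K,\ldots,K]$, and then inherits decodability and privacy from the \cite{c4} reduction exactly as you do. Your write-up is in fact more detailed than the paper's, but the approach is identical.
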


In \cite{c12}, it is showed that for a $(4,2)$ non-private coded caching scheme the rate $\frac{4}{3}$ is achievable under the demand profile $[2,2]$ at a memory $M=\frac{1}{3}$ which is unachievable for the demand profile $[3,1]$. But, we can construct a $(2,2)$ demand private scheme that achieves the rate $\frac{4}{3}$ at a cache memory of $M=\frac{1}{3}$ by the procedure given in \cite{c4}. This example motivates the general results presented in the subsequent sections.

\begin{rem}
\label{rem2}
The condition given in Theorem \ref{thm2} or the improved condition in Theorem \ref{thm2a} is not necessary for the existence of a $(K,N)$ demand-private coded caching scheme. In \cite{c13}, a different approach for the construction of demand-private coded caching scheme is adapted which achieves $(M,R)$ points that are not achievable by the virtual user scheme in \cite{c4} especially when $N>K$.  
\end{rem}

The contributions of this paper may be summarized as follows:
\begin{itemize}
\item The achievability of an $(M,R)$ pair $(\frac{1}{K(N-1)+1},N(1-\frac{1}{K(N-1)+1}))$ for a $(KN,N)$ non-private coded caching scheme under uniform demand profile is shown. 
\item For cache memory, $0\leq M\leq \frac{1}{K(N-1)+1}$ a $(K,N)$ demand-private coded caching scheme is introduced and proved its optimality.
\item Under demand privacy constraint \eqref{eq2}, the achievability of the rate $N(1-\frac{1}{K(N-1)+1})$ at a cache memory of $\frac{1}{K(N-1)+1}$ generalizes the achievability of the $(M,R)$ pair $(\frac{1}{3},\frac{4}{3})$ for a 2-user, 2-file demand-private coded caching scheme shown in \cite{c5}.
\end{itemize}

The main results of this paper are discussed in Section \ref{results}. In Section \ref{examples}, we show some examples of the proposed demand-private coded caching scheme.

\section{Main Results}
\label{results}

\begin{lem}
\label{lemma1}
For a $(KN,N)$ non-private coded caching system under a demand vector with uniform profile, the $(M,R)$ pair $\left(\frac{1}{K(N-1)+1},N(1-\frac{1}{K(N-1)+1})\right)$ is achievable for a sufficiently large field. Furthermore, for  $M\in [0,\frac{1}{K(N-1)+1}]$,
$$R_{KN,N}(M) = N(1-M)$$ is achievable.
\end{lem}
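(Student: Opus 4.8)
The plan is to exhibit an explicit MDS-coded placement together with a demand-dependent linear delivery, and then verify decodability by a rank argument. Write $D\triangleq K(N-1)+1$, so the target is the pair $(M,R)=\left(\frac1D,\,N(1-\tfrac1D)\right)$. I would split each of the $N$ files into $D$ equal subpackets over a field $\mathbb F_q$, namely $W_n=(W_{n,1},\dots,W_{n,D})$ with each subpacket of size $F/D$ bits; thus a cache of size $\frac{F}{D}$ holds exactly one $\mathbb F_q$-symbol. Identify the whole library with a vector $\mathbf w\in\mathbb F_q^{ND}$, partitioned into $N$ coordinate blocks $U_1,\dots,U_N$ of size $D$, where $U_f$ carries the subpackets of $W_f$. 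For each user $u\in[KN]$ I take the cache to be a single coded symbol $Z_u=\langle a_u,\mathbf w\rangle$ determined by a fixed vector $a_u\in\mathbb F_q^{ND}$ chosen in the placement phase.

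The key design choice is to pick, for each file $f$, the blockwise projections $\{a_u|_{U_f}\}_{u\in[KN]}$ to be $KN$ vectors in \emph{general position} in $\mathbb F_q^{D}$, i.e.\ any $D$ of them are linearly independent; this is where an MDS (e.g.\ Reed--Solomon) code enters, and it is possible once $q$ is large enough, which is the precise meaning of ``sufficiently large field.'' For a demand $\mathbf d$ with uniform profile exactly $K$ users request each file, so exactly $K(N-1)=D-1$ users do \emph{not} request a given file $f$. In the delivery phase the server transmits, separately for each file $f$, a basis of the subspace $T_f\triangleq\operatorname{span}\{a_u|_{U_f}:d_u\neq f\}\subseteq U_f$. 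Since there are $D-1$ such users and they are in general position, $\dim T_f=D-1$, so the total number of transmitted symbols is $N(D-1)$, giving rate $R=\frac{N(D-1)}{D}=N(1-\tfrac1D)$, as required.

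Decodability is then a short linear-algebra check, which I expect to be the main technical step. Fix a user $u$ with $d_u=f$. From the transmissions for each block $g\neq f$ the user can evaluate $\langle t,\mathbf w|_{U_g}\rangle$ for every $t\in T_g$; because $d_u=f\neq g$, its own projection $a_u|_{U_g}$ is one of the vectors spanning $T_g$, so the user recovers $\langle a_u|_{U_g},\mathbf w|_{U_g}\rangle$ and subtracts all these from $Z_u$ to isolate $\langle a_u|_{U_f},\mathbf w|_{U_f}\rangle$. Together with the $D-1$ transmitted combinations spanning $T_f$, this yields $D$ linear equations in the $D$ unknown subpackets of $W_f$; the crucial point is that $a_u|_{U_f}\notin T_f$, which holds precisely because $a_u|_{U_f}$ together with the $D-1$ general-position vectors spanning $T_f$ form $D$ independent vectors by the MDS property. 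Hence the coefficient matrix is invertible and $W_f$ is decoded. Observe that the argument used only that any $D$ of the projections are independent, so the same fixed placement works for every uniform demand simultaneously.

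Finally, the ``furthermore'' claim follows by memory sharing: the pair $(0,N)$ is trivially achievable under a uniform demand (every file is requested, so the server sends all $N$ files uncoded), and combining it with $\left(\frac1D,N(1-\tfrac1D)\right)$ gives every point on the segment $R=N(1-M)$ for $M\in[0,\frac1D]$. The main obstacle is the placement design and its rank verification; the elegant coincidence that makes the scheme tight is the simultaneous matching of three quantities — $\dim T_f=D-1$, the number $K(N-1)=D-1$ of non-requesting users, and the rate budget $N(D-1)$ — which is exactly what forces the MDS structure.
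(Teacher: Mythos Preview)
Your proposal is correct and is essentially the same scheme as the paper's proof, phrased in a more abstract linear-algebra language: your cache vectors $a_u$ with blockwise projections in general position are precisely the columns of the paper's $(KN,K(N-1)+1)$ MDS generator matrix, your cache symbol $Z_u=\langle a_u,\mathbf w\rangle$ is the paper's $Z_k=\bigoplus_n\mathcal C_{n,k}$, and transmitting a basis of $T_f$ amounts exactly to sending the $D-1$ coded subfiles $\{\mathcal C_{f,k}:d_k\neq f\}$, after which the decodability and rate count match verbatim. The memory-sharing argument with $(0,N)$ is also identical.
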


\begin{proof}
We present a $(KN,N)$ non-private coded caching scheme that achieves an $(M,R)$ pair $\left(\frac{1}{K(N-1)+1},N(1-\frac{1}{K(N-1)+1})\right)$ under a demand vector with uniform profile. That is, the server does the cache placement with a prior knowledge that the demand vectors will have a uniform profile. We are assuming that the files segments are from a finite field of large size. Also, $\oplus$ and $\ominus$ denote the field addition and subtraction respectively.
\subsection{Placement Phase}
File $W_n, \text{  } \forall n\in[N]$ is divided into $K(N-1)+1$ subfiles of equal size, $W_{n,j}, \text{ }j\in[K(N-1)+1]$. The normalized size of a subfile is, $|W_{n,j}| = \frac{1}{K(N-1)+1}$.
Encode the subfiles, $[W_{n,j}: j\in [K(N-1)+1]$ of the file $W_n$ with a $(KN,K(N-1)+1)$ MDS code, which results in $KN$  coded subfiles, each with a normalized size $\frac{1}{K(N-1)+1}$. The $i-th$ coded subfile of the file $W_n$ is denoted as $\mathcal{C}_{n,i}\text{ },\text{ }1\leq i \leq KN$.
\begin{itemize}
\item The cache of user $k\in[KN]$ is filled with $\bigoplus_{n=1}^N \mathcal{C}_{n,k}$. That is,
$$Z_k = \bigoplus_{n=1}^N \mathcal{C}_{n,k},\; k = 1,2,\dots,KN. $$
\end{itemize}
\subsection{Delivery Phase}
Let $\mathbf{d}_{KN\times1}$ be the demand vector and  $S_n$ be the set of all users demanding the file $W_n$. 
$$S_n\coloneqq \{k\in[KN]\text{  }: \text{   } d_k = n\} $$
We are assuming a uniform demand profile for $\mathbf{d}$. There are $N$ files and $KN$ users in the system. That is, each file is demanded exactly by $K$ users. Let $S_n = \{S_{n,1},S_{n,2},\hdots,S_{n,K}\}$. The content delivery is as follows,
\begin{itemize}
\item Corresponding to a user $k\in [KN]$, transmit $\mathcal{C}_{i,k}$, $\forall i\neq d_k.$ 
\end{itemize}
\subsection{decodability}
Claim: All the users can get their demanded file by the given transmission strategy.
\\
All users are requiring $K(N-1)+1$ subfiles of their respective demanded file. For a user $k \in S_n$, those subfiles  can be decoded by any $K(N-1)+1$ out of the $KN$ coded subfiles, $\mathcal{C}_{n,i}\text{ },\text{ }i \in [KN]$.
\begin{itemize}
\item  User $k$ directly receives the coded subfiles $\mathcal{C}_{n,j},\text{  }\forall j\in [KN]\backslash S_n$. That is, the user gets $KN-K$ coded subfiles of the file $W_n$.
\item User $k$ can calculate $\mathcal{C}_{n,k}$ by,
$$\mathcal{C}_{n,k} = Z_k \ominus \left( \underbrace{ \bigoplus_{i=1,i\neq n}^N \mathcal{C}_{i,k}}_{\text{Can calculate from transmissions }}\right)$$
\end{itemize}
That is, the user $k\in [KN]$ receives $K(N-1)+1$ coded subfiles of their respective demanded file. From those coded subfiles, the user can decode all the subfiles $W_{d_k,j}$, $\forall j\in [K(N-1)+1]$.

\subsection{Rate Calculation}
\begin{itemize}
\item Corresponding to each user, $N-1$ coded subfiles are transmitted. Therefore, a total of $KN(N-1)$ subfiles are getting transmitted in the delivery phase. 
\end{itemize}
The total delivery load $ R= KN(N-1)$ subfiles.  The normalized size of each of the coded subfiles is $\frac{1}{K(N-1)+1}$. Therefore the net rate is,
$$R = \frac{KN(N-1)}{K(N-1)+1} = N\left(1-\frac{1}{K(N-1)+1}\right).$$

The $(M,R)$ pair $(0,N)$ is achievable for a $(KN,N)$ coded caching scheme under any demand vector irrespective of its profile. Therefore, the rate of $R_{KN,N}(M) = N(1-M)$, $[0\leq M\leq \frac{1}{K(N-1)+1}]$ is achievable under uniform demand profile by memory sharing between $M=0$ and $M=\frac{1}{K(N-1)+1}$ appropriately.

This completes the proof of Lemma \ref{lemma1}. \end{proof}

\begin{thm}
\label{thm3}
         For a $(K,N)$ demand private coded caching scheme, the memory-rate pair $\left(\frac{1}{K(N-1)+1},N\left(1-\frac{1}{K(N-1)+1}\right)\right)$ is achievable. Furthermore, for $M\in [0, \frac{1}{K(N-1)+1}]$, we have
\begin{equation}
\label{eq4}
R_{K,N}^{p*}(M)=N(1-M).
\end{equation}
\end{thm}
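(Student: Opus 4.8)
The plan is to prove the two directions separately---an upper bound from the constructions already in hand and a matching lower bound from the existing converse---and then combine them. For achievability I would not build anything new. Lemma \ref{lemma1} already supplies a $(KN,N)$ non-private scheme achieving $R_{KN,N}(M)=N(1-M)$ for every $M\in[0,\frac{1}{K(N-1)+1}]$ \emph{under a uniform demand profile}. Theorem \ref{thm2a} states exactly that such uniform-profile achievability in the $(KN,N)$ non-private setting carries over to the $(K,N)$ demand-private setting. Composing the two gives $R_{K,N}^{p}(M)\le N(1-M)$ on the whole interval, and the chain \eqref{eq3} then yields $R_{K,N}^{p*}(M)\le N(1-M)$, including the endpoint pair $\left(\frac{1}{K(N-1)+1},N(1-\frac{1}{K(N-1)+1})\right)$.

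For the converse I would specialize the Yan \emph{et al.} lower bound \eqref{lowerbound} to the single index $l=N$. Since the right-hand side of \eqref{lowerbound} is a maximum over $l\in[N]$, any fixed $l$ furnishes a valid lower bound; the point is to pick the one that matches the achievable rate. Choosing $l=N$ kills the factor $(N-l)$, so the middle term vanishes and the whole expression collapses to $N+0-NM=N(1-M)$. Hence $R_{K,N}^{p*}(M)\ge N(1-M)$ for every $M$, with no residual dependence on $K$ surviving in the final expression.

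Combining the two inequalities forces $R_{K,N}^{p*}(M)=N(1-M)$ on $[0,\frac{1}{K(N-1)+1}]$, which also retroactively confirms that $l=N$ attains the maximum in \eqref{lowerbound} over this low-memory regime. The only genuine decision in the argument is that choice of $l$: the vanishing of the $(N-l)$ factor is precisely what lets the converse meet the MDS-based upper bound exactly. I do not anticipate a real obstacle, since essentially all the technical content---the MDS placement and its decodability---has been front-loaded into Lemma \ref{lemma1}, while the privacy transfer is handled wholesale by Theorem \ref{thm2a}.
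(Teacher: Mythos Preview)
Your proposal is correct and follows essentially the same approach as the paper: achievability via Lemma~\ref{lemma1} composed with the uniform-profile transfer of Theorem~\ref{thm2a}, and the converse by specializing the Yan \emph{et al.} bound \eqref{lowerbound} at $l=N$. The paper's proof is essentially identical, differing only in that it cites both Theorem~\ref{thm2} and Theorem~\ref{thm2a} for the transfer step and omits your explanatory remarks about why $l=N$ is the right choice.
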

\begin{proof}
Achievability of the rate $R_{K,N}^{p}(M) = N(1-M)$, $[0\leq M\leq \frac{1}{K(N-1)+1}]$  is straight forward. Lemma 1 shows the existence of a $(KN,N)$ non-private coded caching scheme that achieves the rate, $R_{KN,N}(M) = N(1-M)$ under uniform demand profile for $M\in[0, \frac{1}{K(N-1)+1}]$ . Therefore, by Theorem \ref{thm2} and Theorem \ref{thm2a}, the achievability of the same rate for a $(K,N)$ demand private coded caching scheme is guaranteed. Therefore, 
\begin{equation}
\label{eq5}
R_{K,N}^{p*}(M)\leq N(1-M) 
\end{equation}
for $0\leq M\leq \frac{1}{K(N-1)+1}$.

By the lower bound \eqref{lowerbound} given by Yan \textit{et al.} 
$$R^{p*}_{K,N}(M)\geq \max_{l\in[N]} \left(l+\frac{\min\{l+1,K\}(N-l)}{N-l+\min\{l+1,K\}}-lM\right)$$
\begin{equation}
\label{eq6}
R_{K,N}^{p*}(M)\geq N(1-M).
\end{equation}
Equation \eqref{eq6} is obtained by putting $l=N$ in the lower bound \eqref{lowerbound}. From \eqref{eq5} and \eqref{eq6}, we can conclude that for a $(K,N)$ demand-private coded caching scheme with cache memory  $M\in [0,\frac{1}{K(N-1)+1}]$, we have $R_{K,N}^{p*}(M)=N(1-M).$ \end{proof}

\begin{rem}
For $K=2, N=2$, the proposed scheme is same as the demand private scheme in \cite{c5} for the cache memory $M=\frac{1}{3}$.
\end{rem}

\subsection{Performance Comparison}
Here, we compare the performance of the proposed scheme with known schemes in \cite{c4}, \cite{c5}, \cite{c6} and \cite{c13} in terms of rate and subpacketization.
\begin{itemize}
\item Aravind \textit{et al.} shows the achievability of an $(M,R)$ pair $(\frac{2}{3},1)$ for a 2-file, 2-user demand-private coded caching with a subpacketization of 3 in \cite{c6}. That scheme can achieve the rate $\frac{3}{2}$ at $M=\frac{1}{3}$ by memory sharing. Whereas our proposed scheme achieves the rate $\frac{4}{3}$ at $M=\frac{1}{3}$ with the same subpacketization of 3. 
\item In \cite{c5}, authors gave the exact rate-memory trade-off for $(2,2)$ demand-private coded caching schemes and proved that $R_{2,2}^{p*}(\frac{1}{3})=\frac{4}{3}$ and can be achieved by a subpacketization of 3. Our work generalizes that particular memory-rate pair for a general $K$-user $N$-file demand-private coded caching system. That is, for the case of  2-user, 2-file demand-private coded caching, our proposed scheme achieves the rate $\frac{4}{3}$ at cache memory $M=\frac{1}{3}$ with same subpacketization of 3. 
\end{itemize}

In \cite{c4} and \cite{c13}, authors deal with general $K$-user, $N$-file demand-private coded caching systems. We compare the transmission rates of those schemes with our proposed scheme at cache memory $M=\frac{1}{K(N-1)+1}$.
\begin{itemize}
\item We showed the achievability of the rate $N(1-\frac{1}{K(N-1)+1})$ at a cache memory $\frac{1}{K(N-1)+1}$ with a subpacketization of $K(N-1)+1$. To fill a cache of size $\frac{1}{K(N-1)+1}$, a file has to be divided into at least $K(N-1)+1$ subfiles under uniform subpacketization.  
\item The virtual user scheme proposed in \cite{c4} achieves memory-rate pairs $(0,N)$ and $(\frac{1}{K},\frac{2K-N-1}{2K})$ with subpacketization 1 and $K$ respectively. So by memory sharing the rate ,
$$R = N\left(1-\frac{1}{K(N-1)+1}\right)+\frac{N-1}{2[K(N-1)+1]}$$
is achievable at memory $M=\frac{1}{K(N-1)+1}$ .We can readily observe that the rate is increased by an amount of $\frac{N-1}{2[K(N-1)+1]}$ compared to the rate achieved by our proposed scheme. To achieve this rate, a file has to be divided into $K$ subfiles of normalized size $\frac{1}{K(N-1)+1}$ and 1 subfile of normalized size $(1-\frac{K}{K(N-1)+1})$. The non-uniformity in the subpacketization is the consequence of memory sharing. The $(K,N)$ demand-private scheme in \cite{c4} is derived from the $(KN,N)$ non-private scheme in \cite{c2} where the placement is uncoded. By coded placement, we can lower the transmission rate especially in the lower memory regime. In addition to that, if we consider only the demands with uniform profile, a further reduction in delivery load is possible. By making use of these facts we are attaining the optimal rate for cache memory $M\leq \frac{1}{K(N-1)+1}$.
\item The \textit{LFR-DPCU} scheme proposed in \cite{c13} achieves the rate
\begin{equation*}
  R =
    \begin{cases}
       N\left(1-\frac{1}{K(N-1)+1}\right)+\frac{N-1}{K(N-1)+1} & \text{for $K\geq N$}\\
       N\left(1-\frac{1}{K(N-1)+1}\right)+\frac{K}{K(N-1)+1} & \text{for  $K<N$}\\
    \end{cases}       
\end{equation*}

at memory $M=\frac{1}{K(N-1)+1}$. For this memory, a file is divided into 2 subfiles of normalized sizes $(1-\frac{1}{K(N-1)+1})$ and $(\frac{1}{K(N-1)+1})$ to achieve the above rate. Even though the number of subfiles into which a file is divided is 2, the size of the smallest subfile still remains $\frac{1}{K(N-1)+1}$.  Our proposed scheme does better in terms of delivery load by paying a little in subpacketization compared to the demand-private scheme in \cite{c13} for cache memory $M\leq \frac{1}{K(N-1)+1}$.
\end{itemize}

\section{EXAMPLES}
\label{examples}


\begin{table*}[!htp]

  \centering
  \caption{The transmission $X^{d}$ for a particular choice of cache content and demand vector.}
  \label{tab1}
  \footnotesize\makegapedcells
   \adjustbox{max height=\dimexpr\textheight-5.5cm\relax,
   max width=\textwidth}{
  \begin{tabular}{|*{5}{c|}}
    \hline
    \multirowcell{5}[2ex]{\diagbox[height=5.0\line]{\raisebox{6ex}{\makecell{Content in\\Cache 1, Cache 2} }}{\raisebox{-3.5ex}{$X^{d}$}}} &\makecell{$B_1$\\$A_2$} & \makecell{$B_1$\\$A_2$} & \makecell{$A_1$\\$B_2$} & \makecell{$A_1$\\$B_2$}\\
          &$B_3$ &$A_3$ &$B_3$ &$A_3$\\
          &$A_1\oplus A_2\oplus A_3$ &$B_1\oplus B_2\oplus B_3$ &$A_1\oplus A_2\oplus A_3$ &$B_1\oplus B_2\oplus B_3$ \\ \hline
    $Z_1$,$Z_3$ &$[A,A]$ &$[A,B]$ &$[B,A]$ &$[B,B]$  \\ \hline
    $Z_1$,$Z_4$ &$[A,B]$ &$[A,A]$ &$[B,B]$ &$[B,A]$  \\ \hline
    $Z_2$,$Z_3$ &$[B,A]$ &$[B,B]$ &$[A,A]$ &$[A,B]$  \\ \hline
    $Z_2$,$Z_4$ &$[B,B]$ &$[B,A]$ &$[A,B]$ &$[A,A]$  \\
    \hline
    
  \end{tabular}
  }
\end{table*}

\subsection{Example 1 (N=K=2)}
We illustrate a $(4,2)$ non-private coded caching scheme achieving a rate $\frac{4}{3}$ under uniform demand profile at $M=\frac{1}{3}$. The construction of $(2,2)$ demand-private coded caching scheme from the $(4,2)$ non-private scheme is also shown. 
\begin{itemize}
\item Placement Phase: We call the two files in the system as $A$ and $B$. Each file is divided into 3 subfiles of equal size $A_1,A_2,A_3$ and $B_1,B_2,B_3$. We use  $G\in \mathcal{F}_2^{3\times 4}$  a generator matrix of a $(4,3)$-MDS code.
$$G = 
\begin{bmatrix} 
1 & 0 & 0 &1\\
0 & 1 & 0 &1\\
0 & 0 & 1  &1\\
\end{bmatrix}$$
Encode $[A_1, A_2, A_3]$ and $[B_1, B_2, B_3]$ using $G$. Then according to our previous notation, $\mathcal{C}_{1,1}=A_1$, $\mathcal{C}_{1,2}=A_2$, $\mathcal{C}_{1,3}=A_3$ and $\mathcal{C}_{1,4}=A_1\oplus A_2\oplus A_3$. Similarly, $\mathcal{C}_{2,1}=B_1$, $\mathcal{C}_{2,2}=B_2$, $\mathcal{C}_{2,3}=B_3$ and $\mathcal{C}_{2,4}=B_1\oplus B_2\oplus B_3$. Store $\mathcal{C}_{1,1}\oplus \mathcal{C}_{2,1}$ in cache 1. Populate all the caches in the similar fashion. And the cache contents are as follows,
		\begin{center}
			\begin{tabular}{|c|c|}
				\hline
				$Z_1$  & $A_1\oplus B_1$\\
				\hline	
				$Z_2$  & $A_2\oplus B_2$\\
				\hline
				$Z_3$  & $A_3\oplus B_3$\\
				\hline
				$Z_4$  & $A_1\oplus A_2\oplus A_3\oplus B_1\oplus B_2\oplus B_3$\\
				\hline
			\end{tabular}
		\end{center}

Let the demand vector be $\mathbf{d} = [A, A, B, B]$.
\item Delivery Phase: The server makes the transmissions: $B_1,B_2,A_3$ and $A_1\oplus A_2 \oplus A_3$. It can be verified that all the users get their requested file.
\end{itemize}
We have an example for $(4,2)$ non-private scheme achieving a memory rate pair $(\frac{1}{3},\frac{4}{3})$ (\cite{c12}).  Now, by the procedure given in \cite{c4}, we construct a $(2,2)$ demand-private coded caching scheme from the $(4,2)$ non-private scheme.

There are two users connected to the server. In user 1's cache, server stores either $Z_1$ or $Z_2$, with equal probability. Similarly, in second user's cache, server can store $Z_3$ or $Z_4$ during the placement phase. But the actual choice of the cache content is private between the server and the user. Suppose, server stores $Z_1$ in cache 1 and $Z_3$ in cache 2.  During the delivery phase, the actual demand vector $\mathbf{d}$ is mapped to the virtual demand vector $\mathbf{\tilde{d}}$. Assume that $\mathbf{d} = [ A ,B]$. Then $\mathbf{\tilde{d}} = [A ,B ,B ,A]$. The transmission is corresponding to $\mathbf{\tilde{d}}$. That is server sends $B_1$, $A_2$, $A_3$ and $B_1\oplus B_2 \oplus B_3$. From these transmissions, user 1 can decode the file $A$ and user 2 can decode $B$. Corresponding to the cache content and the demand vector, the virtual demand vector will vary. With the virtual demand vector, $X^{d}$ also varies. 

Table \ref{tab1} specifies the transmission $X^{d}$ for a given choice of cache content and demand vector. For example, if $Z_1$ and $Z_3$ are stored in cache 1 and cache 2 respectively, then for demand vector $\mathbf{d} = [ B,A]$, the transmissions are $A_1$, $B_2$, $B_3$ and $A_1\oplus A_2 \oplus A_3$. It can be seen that the user demands will remain private under this scheme. We can verify from Table \ref{tab1} that, different cache assignments lead to the same transmission corresponding  to different demand vectors. For instance, suppose server stores $Z_1$ and $Z_3$ in cache 1 and cache 2 respectively and assume that the demand vector is $[A,B]$. The transmissions are $B_1$, $A_2$, $A_3$ and $B_1\oplus B_2 \oplus B_3$. But the cache assignment $Z_1$, $Z_4$ also lead to the same transmission for the demand vector $[A,A]$. Since, the actual cache content is private, the user demands will also remain private.

\subsection{Example 2 (N=2, K=3)}
We illustrate a $(6,2)$ non-private coded caching scheme achieving a rate $\frac{3}{2}$ at $M=\frac{1}{4}$ under uniform demand profile. We call the two files in the system as $A$ and $B$. Here also, we assume that the file segments are coming from $\mathcal{F}_5\coloneqq \{0,1,2,3,4\}$.
\begin{itemize}
\item Placement Phase: Each file is divided into 4 subfiles of equal size $A_1,A_2,A_3,A_4$ and $B_1,B_2,B_3,B_4$. $G$ is a systematic generator matrix for a $(6,4)$ MDS code. Encode $[A_1, A_2, A_3, A_4]$ and $[B_1, B_2, B_3, B_4]$ using $G$. 
$$G = 
\begin{bmatrix} 
1 & 0 & 0 & 0 &1 &1\\
0 & 1 & 0 & 0 &1 &2\\
0 & 0 & 1 & 0 &1 &3\\
0 & 0 & 0 & 1 &1 &4\\
\end{bmatrix}\in \mathcal{F}_5^{4\times 6}$$
The coded subfiles are $\mathcal{C}_{1,1}=A_1$, $\mathcal{C}_{1,2}=A_2$, $\mathcal{C}_{1,3}=A_3$, $\mathcal{C}_{1,4} =A_4$, $ \mathcal{C}_{1,5} =A_1\oplus A_2\oplus A_3\oplus A_4$ and $ \mathcal{C}_{1,6} =A_1\oplus 2A_2\oplus 3A_3\oplus 4A_4$. Similarly, $\mathcal{C}_{2,1}=B_1$, $\mathcal{C}_{2,2}=B_2$, $\mathcal{C}_{2,3}=B_3$, $\mathcal{C}_{2,4} =B_4$, $ \mathcal{C}_{2,5} =B_1\oplus B_2\oplus B_3\oplus B_4$ and $ \mathcal{C}_{2,6} =B_1\oplus 2B_2\oplus 3B_3\oplus 4B_4$.
The cache contents are
		\begin{center}
			\begin{tabular}{|c|c|}
				\hline
				$Z_1$  & $\mathcal{C}_{1,1}\oplus \mathcal{C}_{2,1}$\\
				\hline	
				$Z_2$  & $\mathcal{C}_{1,2}\oplus \mathcal{C}_{2,2}$\\
				\hline
				$Z_3$  & $\mathcal{C}_{1,3}\oplus \mathcal{C}_{2,3}$\\
				\hline
				$Z_4$  & $\mathcal{C}_{1,4}\oplus \mathcal{C}_{2,4}$\\
				\hline
				$Z_5$  & $\mathcal{C}_{1,5}\oplus \mathcal{C}_{2,5}$\\
				\hline
				$Z_6$  & $\mathcal{C}_{1,6}\oplus \mathcal{C}_{2,6}$\\
				\hline
			\end{tabular}
		\end{center}

Let the demand vector is $\mathbf{d} = [A, A,A,B, B, B]$. The demand profile is $[3,3]$. That is, both the files are requested by 3 users.
\item Delivery Phase: Following transmissions are made: $\mathcal{C}_{2,1}$, $\mathcal{C}_{2,2}$, $\mathcal{C}_{2,3}$, $\mathcal{C}_{1,4}$, $\mathcal{C}_{1,5}$ and $\mathcal{C}_{1,6}$.
\end{itemize}
User 1, user 2 and user 3 are requesting for file $A$. They  directly get the coded subfiles $\mathcal{C}_{1,4}$, $\mathcal{C}_{1,5}$ and $\mathcal{C}_{1,6}$ of file $A$. In addition to that, user 1 can decode the coded subfile $\mathcal{C}_{1,1}$ as $\mathcal{C}_{1,1}=Z_1\ominus \mathcal{C}_{2,1}$. Since, every $4\times 4$ submatrix of $G$ is invertible, user 1 can decode the subfiles $A_1$, $A_2$, $A_3$ and $A_4$ from the 4 coded subfiles $\mathcal{C}_{1,1}$, $\mathcal{C}_{1,4}$, $\mathcal{C}_{1,5}$ and $\mathcal{C}_{1,6}$. Similarly, user 2 can get the coded subfile $\mathcal{C}_{1,2}$ and user 3 can get the coded subfile $\mathcal{C}_{1,3}$. So they can also get their demanded file $A$ from the 4 available coded subfiles. User 4, user 5 and user 6 are demanding for file $B$. All of them directly receive the coded subfiles $\mathcal{C}_{2,1}$, $\mathcal{C}_{2,2}$ and $\mathcal{C}_{2,3}$ of file $B$. From the local cache content, users can get one more coded subfile. That means, from the transmissions all the users get 4 encoded subfiles of the demanded file. From those subfiles, users can decode their required file. Demands of all the six users can be satisfied by the 6 transmissions given above. Since each transmission is $\frac{1}{4}$ of a file size, the normalized rate is $\frac{3}{2}$.

\section{future work}
In this work, we showed the achievability of a particular rate-memory pair for a general $K$-user, $N$-file coded caching problem satisfying the demand privacy requirements and also showed its optimality under a certain condition. The optimal rate-memory trade-off for a general $(K,N)$-demand private coded caching problem still remains open. Investigating different coded caching system models under demand privacy constraints is also very important.

\section{Acknowledgment}
This work was supported partly by the Science and Engineering Research Board (SERB) of Department of Science and Technology (DST), Government of India, through J.C. Bose National Fellowship to B. Sundar Rajan.

\end{document}